\documentclass[conference]{IEEEtran}

\ifCLASSINFOpdf
\else

\fi

\DeclareMathAlphabet{\mathpzc}{OT1}{pzc}{m}{it}  

 
\usepackage{wrapfig}
\usepackage[utf8]{inputenc}
\usepackage{tikz}
\usetikzlibrary{positioning}
 \usepackage[utf8]{inputenc}
\usepackage[pdfencoding=auto]{hyperref}
\usepackage{amsmath}
\usepackage{mathrsfs}  
\usepackage{times} 
\usepackage{sidecap}
\usepackage{graphicx} 
\graphicspath{{figures/}} 
\usepackage{booktabs} 
\usepackage[font=small,labelfont=bf]{caption} 
\usepackage{amsfonts, amsmath, amsthm, amssymb} 
\usepackage{wrapfig}
\usepackage[utf8]{inputenc}
\usepackage{tikz}
\usepackage{array}
\usepackage{tabu}

\font\msbm=msbm10 at 10pt

\newcommand{\FF}{\mbox{\msbm F}}

\def \F {{\FF}}

\newtheorem{theorem}{Theorem}

\newtheorem{remark}{Remark}
\newtheorem{example}{Example}
\newtheorem{definition}{Definition}

\hyphenation{op-tical net-works semi-conduc-tor}

\begin{document}
%
\title{On Some Universally Good Fractional Repetition Codes}

\author{
\IEEEauthorblockN{Shreyansh A. Prajapati and Manish K. Gupta\\}
\IEEEauthorblockA{Laboratory of Natural Information Processing \\ Dhirubhai Ambani Institute of Information and Communication Technology
Gandhinagar, Gujarat, 382007 India\\
Email: shreyansh\_prajapati@daiict.ac.in, mankg@computer.org}
}

\maketitle

\begin{abstract}
Data storage in \textit{Distributed Storage Systems} (DSSs) is a multidimensional optimization problem. Using network coding, one wants to provide 
reliability, scalability, security, reduced storage overhead, reduced bandwidth for repair and minimal disk I/O etc. in such systems. Regenerating codes have been used to optimize some of these parameters, where a file can be reconstructed by contacting any $k$ nodes in the system and in case of node failure it can be repaired by using any $d$ nodes in the system. This was further generalized to Fractional repetition (FR) codes (a smart replication of encoded packets) on $n$ nodes which also provides optimized disk I/O and where a node failure can be repaired by contacting some specific set of nodes in the system. Several constructions of FR codes using graphs and combinatorial designs are known. In particular,  some constructions of codes for heterogeneous DSSs are given using partial regular graph (where number of packets on each node is different) and ring construction. In this work, we show that the codes constructed using the partial regular graph are universally good code. Further, we found several universally good codes using ring construction and $t$-construction.  

\end{abstract}

\IEEEpeerreviewmaketitle


\section{Introduction}
The \textit{cloud} word is very trending in the 21st century. Cloud is more frequently used as a data storage device. Cloud, also known as a Distributed Storage Systems(DSSs), is a connected network of unreliable data storage devices where data of an individual or organization can be stored. After the break through discovery of Internet watching videos on Youtube, photos of your friends on Facebook, collaborating to work with your colleagues on Google DOC, reading a newspapers online, studying various courses using tutorials and presentations is very easy. Distributed Distributed Storage Systems (DSS) are widely used to store a large amount of data with high reliability for a long period of time using a distributed collection of storage nodes in large data centers. Individual data nodes can be unreliable \cite{Ghemawat03}. 

Traditional DSSs are using repetition codes to store data \cite{5496972}. Source file will be divided into chunks of certain size and stored on the different data nodes on the DSS. Each chunk known as \textit{packets} has some encoded fraction of information of the source file. In order to provide reliability of the data, most of the DSSs replicate the packets several times. Replication scheme is the easiest scheme to ensure reliability but storing the data in this manner will increase the storage overhead. In case of node failure, new node will copy data from its replication.  The total amount of data packets to be downloaded to repair a failed node is known as \textit{Repair bandwidth}. Total number of nodes from which the packets are downloaded is called the \textit{Repair degree}. Repair can be \textit{exact} where exact copy of the lost packet is copied on the new node or \textit{functional} where new packet is the function of the lost packet. Erasure coding schemes were introduced for enhanced storage efficiency. Classical erasure codes give greater degree of data reliability with small storage overhead as compared to Replication scheme. For that reason, giant market players in cloud storage like Facebook Analytics Hadoop Cluster \cite{XorbasVLDB}, Microsoft Windows Azure \cite{huang2012erasure} are using erasure coding techniques for improved performance and low storage overhead. In erasure coding scheme, the file of size $B$ is divided into k encoded message packets of size $B$/$k$. The symbols of each packets are from field $\F_q$. These packets are encoded using an ($n$, $k$) Maximum Distance Separable (MDS) codes and encoded packets are spread on $n$ distinct nodes in such a manner that any $k$ (called reconstruction degree) nodes have sufficient encoded packets to reconstruct the complete source file. As mentioned in  \cite{XorbasVLDB} Reed-Solomon codes have been implemented for encoding of $8\%$ of total stored data on the top of Facebook HDFS. $20\%$ of the network traffic was generated because of the repair traffic of that $8\%$ of encoded data. Erasure codes are not optimal for node repairs. To optimize the two conflicting parameters viz. data storage space and repair bandwidth in a seminal paper, Dimakis et al. have introduced \textit{regenerating codes} \cite{5550492}. The trade-off curve between the node storage capacity and the repair bandwidth is analysed for the regenerating codes. Trade-off curve between node storage capacity and repair bandwidth have been studied by many researchers for various DSSs \cite{6620424,DBLP:journals/corr/BenerjeeG15,ETT:ETT2887,Akhlaghi20102105,5550492}. On the trade-off curve, codes associated with the optimal parameters are known as minimum storage regenerating (MSR) and minimum bandwidth regenerating (MBR) codes. Construction of MBR codes with exact uncoded repair is given in \cite{RSKR09}. These codes reduces the disk I/O operations required to repair a failed node. Extension of their work led to the idea of the Fractional Repetition (FR) code \cite{rr10}. 

The FR code consist of two layer coding, inner repetition code and outer MDS code \cite{rr10}. On ($n,k,d$) DSS with identical repair traffic $\beta$, FR code can be characterized by the parameters $n,\theta,\alpha$ and $\rho$, where replicas of $\theta$ packets (each replicated $\rho$ times) are stored among $n$ distinct nodes having node storage capacity $\alpha$ each. Construction of the particular FR code in the particular work is done by Regular graph and Steiner system \cite{rr10}. The conditions on parameters for the existence of such FR codes are investigated in \cite{DBLP:journals/corr/abs-1201-3547}. Constructions of FR codes defined on ($n,k,d$) DSS by bipartite graphs \cite{6120326}, resolvable designs \cite{6483351}, Kronecker product of two Steiner systems \cite{6810361}, randomize algorithm \cite{6033980}, incidence matrix \cite{DBLP:journals/corr/abs-1303-6801} are also investigated. FR codes on heterogeneous DSS are also investigated by researchers \cite{6804948, 6763122, 7118709, iet:/content/journals/10.1049/iet-com.2014.1225}. In \cite{6804948}, \textit{Irregular Fractional Repetition code} (IFR code) is constructed by uniform hypergraph. In particular, IFR code have non-uniform node storage capacity $\alpha_i$ ($\forall\; i\in[n]$) and failed node is repaired by some $d$ nodes. In \cite{6763122}, \textit{General Fractional Repetition code} (GFR code) is constructed by group divisible designs. The specific GFR code is FR code with repair traffic $\beta\leq 1$ and dynamic node storage capacity $\alpha_i$ ($\forall\; i\in[n]$). FR code with dynamic repair degree called \textit{Weak Fractional Repetition code} (WFR code) is constructed by partial regular graph in \cite{DBLP:journals/corr/abs-1302-3681}.
FR code on Batch code is investigated in \cite{7118709}. In \cite{iet:/content/journals/10.1049/iet-com.2014.1225}, FR code is constructed such that replication factor of an arbitrary packet $P_j$ $(j\in[\theta])$ is either $\rho_1$ or $\rho_2$. Construction of FR code based on sequences, is investigated in \cite{KGBenerjee}.

In \cite{5707092} researchers have generalized the construction of \cite{Rashmi:2009:ECO:1793974.1794188} and introduced DRESS \textit{(Distributed Replication based Exact Simple Storage)} codes \cite{6033980} which consist of the concatenation of an outer MDS code and an inner repetition code called fractional repetition (FR) code.

The paper is organized as follows. In Section 2, we summarize the bounds for universally good codes for both homogeneous and heterogeneous distributed storage systems as discussed in \cite{6763122}. We also give the theorem for reconstruction degree for codes constructed using partial regular graph as described in \cite{DBLP:journals/corr/abs-1302-3681} and prove that these FR codes are universally good. In Section 3, we provide the generalization of ring construction discussed in \cite{DBLP:journals/corr/abs-1302-3681} for $\rho \geq 2$. Finally in Section 4, we give the parameters of universally good FR codes constructed using simple construction studied in Section 3 of \cite{DBLP:journals/corr/abs-1302-3681} for $n = \theta$ and $\rho = d$.

\begin{definition}[Fractional Repetition Code]
 On a distributed storage system with $n$ nodes denoted by $U_i$ ($i \in [n]$) and $\theta$ packets denoted by $P_j$ ($j \in [\theta]$), FR code  $\mathscr{C}$($n, \theta, \alpha, \rho$) is a collection of $n$ subsets $U_i$ ($i\in[n]$) of a set \{$P_j:j\in[\theta]$\}, such that $\forall\ j\in[\theta]$, packet $P_j$ is an element of exactly $\rho$ distinct subsets in the collection $\mathscr{C}$, where $|U_i|=\alpha_i$ which denotes the number of packets stored on the node $U_i$, $\alpha$=$\max_{i\in[n]}\{\alpha_i\}$.
 \label{FR code}
\end{definition}

\begin{example}
Figure-\ref{2} is an example of FR codes for DSS with parameters $\left(n=7, d=5, k=5,\theta=17,\rho=2\right)$. In the Figure-\ref{2}, 17 packets are stored among $U_1, U_2, ..., U_7$ nodes.
\end{example}

$(n,k,d)$ DSSs with $[\theta,M(k)]$ outer MDS code and $(n,\alpha,\rho)$ inner FR code is \textit{universally good} if $\forall$ $ k \leq \alpha$ the DRESS code satisfy the following property:
\begin{equation}
M(k) \geq k\alpha - {k \choose 2}.
\label{eq1}
\end{equation}
 Other bounds for optimal codes are studied in \cite{7118709}.

In \cite{6259860, 6284027, 5934901} researchers have introduced the design of DSSs for local repair $d \leq k$. Construction of \textit{locally repairable FR codes} is discussed in \cite{DBLP:journals/corr/OlmezR14},\cite{7458387}. In this paper we give the construction for FR codes for $d > k$.


\begin{definition}$(\left(d_1,d_2,...,d_m \right)$-Regular Graph):\\
A $\left(d_1,d_2,...,d_m \right)$-Regular Graph $G(V,E)$ is an undirected graph in which $\mid V \mid = n$ and $ \;\exists\;$ $V_1,V_2,...,V_m$ $\subset$ $V$,  $\sum_{i=1}^{\theta} \mid V_i \mid = n$ and $V_i \cap V_j = \phi$,$\left(1 \leq i,j \leq \theta\right)$. All the vertices in the set $V_i$ has degree $d_i$, $\left(1 \leq i \leq m\right)$. If degree of vertices $V_1, V_2, ..., V_n$ is $d_1, d_2, ..., d_n$ respectively then $$\sum_{i=1}^{n} d_i = 2\mid E \mid$$.

\begin{remark}
$d$-Regular graph is a special case of $\left(d_1,d_2,...,d_m\right)$-Regular graph where $d_i = d$ $\forall i \in[\theta]$.
\end{remark}
\end{definition}

Partial regular graph discussed in \cite{DBLP:journals/corr/abs-1302-3681} is a $\left(d_1,d_2,...,d_\theta \right)$-Regular Graph where some $n-1$ vertices has degree $d$ and one vertex has degree $d-1$.Here $n$ and $d$ is odd positive numbers. In \cite{DBLP:journals/corr/abs-1302-3681} researchers have discussed about the construction partial regular graph and construction of  weak FR codes based on the partial regular graph. In this paper we identify that for a given parameter whether constructed WFR codes are \textit{universally good} or not.

\begin{example}
Figure \ref{1} is an example of the partial regular graph PRG($n = 7, d = 5$), where \textit{degree} of node $U_7$ is d-1 and remaining all the vertexes have degree d. Figure \ref{2} is the node packet distribution of FR codes constructed using PRG($n = 7, d = 5$) in Figure \ref{1}.
\end{example}

\begin{theorem}
For $(n,k,d)$ DSS with $[\theta,\theta-1]$ MDS codes, FR codes constructed using partial regular graph has reconstruction degree $k$ = $n-2$.
\label{th1}
\end{theorem}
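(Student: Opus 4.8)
The plan is to recast reconstruction in the DRESS code as a purely combinatorial statement about edges in induced subgraphs of the underlying graph $G=\mathrm{PRG}(n,d)$. Recall that in the graph-based FR code the $\theta=|E(G)|$ packets are identified with the edges of $G$, the $n$ nodes with the vertices, and node $U_i$ stores exactly the packets on the edges incident to $v_i$; thus $\rho=2$, $|U_i|=\deg(v_i)\in\{d-1,d\}$, $\alpha=d$, and $\theta=\frac{1}{2}\big((n-1)d+(d-1)\big)=\frac{nd-1}{2}$ (an integer, as $n,d$ are odd). Since the outer code is an $[\theta,\theta-1]$ MDS code, the file consists of $\theta-1$ message symbols, so a set $S$ of nodes reconstructs it if and only if the nodes in $S$ jointly hold at least $\theta-1$ distinct packets, i.e.\ miss at most one packet.

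Next I would identify the missed packets: a packet $P_e$ with $e=\{u,w\}$ is missed by $S$ exactly when $u\notin S$ and $w\notin S$, i.e.\ when $e$ is an edge of the induced subgraph $G[V\setminus S]$. Hence $S$ reconstructs $\iff |E(G[V\setminus S])|\le 1$, and the reconstruction degree is the smallest $k$ for which every $(n-k)$-vertex induced subgraph of $G$ has at most one edge.

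For the achievability part $k\le n-2$: if $|S|=n-2$ then $|V\setminus S|=2$, and two vertices of a simple graph span at most one edge, so every such $S$ reconstructs and $k=n-2$ works. For the converse $k\ge n-2$ I would show that $k=n-3$ fails: every vertex of $\mathrm{PRG}(n,d)$ has degree at least $d-1$, which is $\ge 2$ in the relevant range $d\ge 3$, so picking any vertex $v$ together with two of its neighbours $u,w$ yields a $3$-set $T=\{v,u,w\}$ inducing the two edges $\{v,u\}$ and $\{v,w\}$; then $S=V\setminus T$ has $n-3$ nodes and misses at least two packets, hence cannot reconstruct. Combining the two bounds gives reconstruction degree exactly $n-2$.

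The computation is light; the subtle step is the converse, where one must invoke the structural property that $\mathrm{PRG}(n,d)$ has minimum degree at least $2$ in order to force a $3$-subset carrying two edges — this is precisely what pins the reconstruction degree at $n-2$ rather than a smaller value, so I expect it to be the crux of the write-up.
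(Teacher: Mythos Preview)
Your argument is correct and follows the same graph-theoretic idea as the paper: packets missed by a node set $S$ are exactly the edges of $G[V\setminus S]$, so with an $[\theta,\theta-1]$ outer MDS code reconstruction holds iff the complement induces at most one edge. The paper's proof records only the achievability direction (removing any $n-2$ vertices leaves two vertices and hence at most one edge), whereas you also supply the converse $k\ge n-2$ by exhibiting, via the minimum-degree condition $d-1\ge 2$, a $3$-vertex set carrying two edges; this extra step is not in the paper but is needed to justify that the reconstruction degree is \emph{exactly} $n-2$ rather than merely at most $n-2$, so your write-up is the more complete of the two.
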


\begin{proof}
Observe that if we remove any $n-2$ vertices of PRG(n,d) then the resultant graph is a complete graph with 2 vertices $K_2$. Only one edge will remain present which means that by contacting any $n-2$ nodes out of $n$ nodes one can get $\theta-1$ distinct packets required to reconstruct the entire file.
\end{proof}

If $k$ is the reconstruction degree then the lower bound on distinct number of packets that can be downloaded by contacting any $k$ nodes with variable repair degree is discussed in Theorem 4 of \cite{6763122}. Weak FR code constructed using PRG ($n,d$) has $d$ packets on $n-1$ nodes and $d-1$ packet on one node. So WFR codes based on the partial regular graph are universally good if the following property is satisfied:
\begin{equation}
M(k) \geq k\alpha - {k \choose 2}-1
\label{par_equ}
\end{equation}
 
\begin{theorem}
WFR codes constructed using partial regular graph are universally good.
\end{theorem}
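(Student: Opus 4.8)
The plan is to verify the relaxed universal-goodness inequality \eqref{par_equ} directly from the graph structure, i.e. to show that for every $k\le\alpha=d$ and every choice of $k$ nodes, those nodes jointly store at least $k\alpha-\binom{k}{2}-1$ distinct packets. First I would recall the combinatorial description of the construction of \cite{DBLP:journals/corr/abs-1302-3681}: the vertices of $\mathrm{PRG}(n,d)$ are the storage nodes $U_1,\dots,U_n$, the edges are the packets (so $\rho=2$ and $\theta=|E|=\frac{nd-1}{2}$), and node $U_i$ stores exactly the $\deg(U_i)$ edges incident to the vertex $U_i$. By the definition of the partial regular graph, $\deg(U_i)=d$ for $n-1$ of the vertices and $\deg(U_n)=d-1$ for the remaining one, which is the source of the $-1$ correction in \eqref{par_equ} relative to \eqref{eq1}.

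Next, I would fix an arbitrary set $S$ of $k$ nodes. The packets retrievable from $S$ are precisely the edges of $\mathrm{PRG}(n,d)$ having at least one endpoint in $S$. Counting incidences, $\sum_{U\in S}\deg(U)$ counts every edge lying entirely inside $S$ twice and every edge with exactly one endpoint in $S$ once; hence the number of distinct packets stored on $S$ equals $\sum_{U\in S}\deg(U)-e(S)$, where $e(S)$ is the number of edges of the induced subgraph on $S$. Since $\mathrm{PRG}(n,d)$ is a simple graph, $e(S)\le\binom{k}{2}$, and since at most one vertex of $S$ has degree below $d$ (namely $U_n$, with degree $d-1$), we have $\sum_{U\in S}\deg(U)\ge kd-1$. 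Combining the two estimates, any $k$ nodes store at least $kd-1-\binom{k}{2}=k\alpha-\binom{k}{2}-1$ distinct packets; minimising over all $k$-subsets $S$ yields $M(k)\ge k\alpha-\binom{k}{2}-1$ for every $k\le\alpha$, which is exactly \eqref{par_equ}. Hence the WFR code is universally good.

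I do not expect a genuine obstacle here: this is the standard ``a clique on $k$ vertices hides at most $\binom{k}{2}$ edges'' argument used for graph-based FR codes, and the only new ingredient is the elementary bookkeeping for the single degree-deficient node, whose one missing incidence is absorbed by the relaxed bound. The points that deserve a careful sentence are (i) that the inequality must be checked for \emph{all} admissible $k$, not merely the reconstruction degree $k=n-2$ supplied by Theorem \ref{th1}, and (ii) that $\binom{k}{2}$ is the tight upper bound on $e(S)$ for a simple graph, so that \eqref{par_equ} cannot in general be sharpened back to \eqref{eq1}; indeed equality $M(k)=k\alpha-\binom{k}{2}-1$ already occurs, for instance at $k=2$ when $U_n$ is adjacent to another node. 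As an alternative, the same conclusion follows by specialising Theorem 4 of \cite{6763122} to the degree sequence $(d,\dots,d,d-1)$, but the self-contained count above seems cleaner.
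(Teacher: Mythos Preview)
Your argument is correct, but it is not the route the paper takes. The paper verifies \eqref{par_equ} only at the single value $k=n-2$ supplied by Theorem~\ref{th1}: it sets $M(k)=\theta-1$ (the dimension of the $[\theta,\theta-1]$ outer MDS code), writes $n=2p+1$, $d=\alpha=2q+1$, uses $nd-1=2\theta$ to get $\theta=2pq+p+q$, substitutes everything into \eqref{part}, and asserts that the resulting two-variable inequality $2p^{2}-2pq-4p+3q+2\ge 0$ holds for all integers $p>q\ge 1$. Your proof instead counts, for an arbitrary $k$-set $S$ of vertices, the edges of $\mathrm{PRG}(n,d)$ meeting $S$ via the identity $|\bigcup_{U\in S}U|=\sum_{U\in S}\deg(U)-e(S)$ together with the simple-graph bound $e(S)\le\binom{k}{2}$ and the degree-sum bound $\sum\deg\ge kd-1$. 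This is the standard ``hidden edges $\le\binom{k}{2}$'' argument for graph-based FR codes, adapted by the obvious $-1$ bookkeeping for the single deficient vertex.

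What your approach buys is twofold. First, it establishes \eqref{par_equ} for \emph{every} $k\le\alpha$ in one stroke, which is what ``universally good'' actually demands; the paper's computation is carried out only at $k=n-2$, which in fact lies at or above $\alpha=d$ rather than below it. Second, it replaces an unproved polynomial inequality by two one-line estimates, so nothing is left to the reader. The paper's parametric substitution, on the other hand, ties the verification to the concrete reconstruction degree from Theorem~\ref{th1} and to the specific outer code $[\theta,\theta-1]$; that is closer to how the DRESS-code file size is used elsewhere in the paper, but as a proof of \eqref{par_equ} your edge-counting argument is both shorter and more complete.
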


\begin{proof}
Since we are using $[\theta,\theta-1]$ outer MDS code $M(k)=\theta-1$. As per the Equation (\ref{par_equ}) we have to prove that WFR constructed using partial regular graph satisfy the following property:\\
\begin{equation}
\theta-1 \geq k\alpha - {k \choose 2}-1 
\label{part}
\end{equation}\\
Since number of nodes $n$ and $d$ is odd in partial regular graph, take $n = 2p+1$ and $d = 2q+1$, where $p,q \in N,p>q.$ As discussed in Theorem \ref{th1}, $k = n-2 = 2p-1$, $\rho = 2$. Property of FR codes was discussed in \cite{5707092} is as follows:
\begin{equation}
nd = \rho \theta
\label{eq_4}
\end{equation} 
Equation-(\ref{eq_4}) is true for any FR codes with replication factor $\rho$ for $(n,k,d)$ DSSs. Since degree of one node is $d-1$ WFR constructed using partial regular graph will satisfy $nd-1 = \rho \theta$. So $\theta = 2pq+p+q$. If we substitute all the values in Equation (\ref{part}) and simplify it then we'll get following non-linear equation of two variables:
  
\[
2p^2-2pq-4p+3q+2 \geq 0.
\]

Which is always true for any $p$ and $q$,$p,q \in N$,where $p>q$. WFR constructed using partial regular graph satisfy the property mentioned in Equation (\ref{par_equ}). Hence WFR codes constructed using partial regular graph is \textit{universally good}.
\end{proof}

\begin{figure}
    \centering
    \includegraphics[scale=0.3]{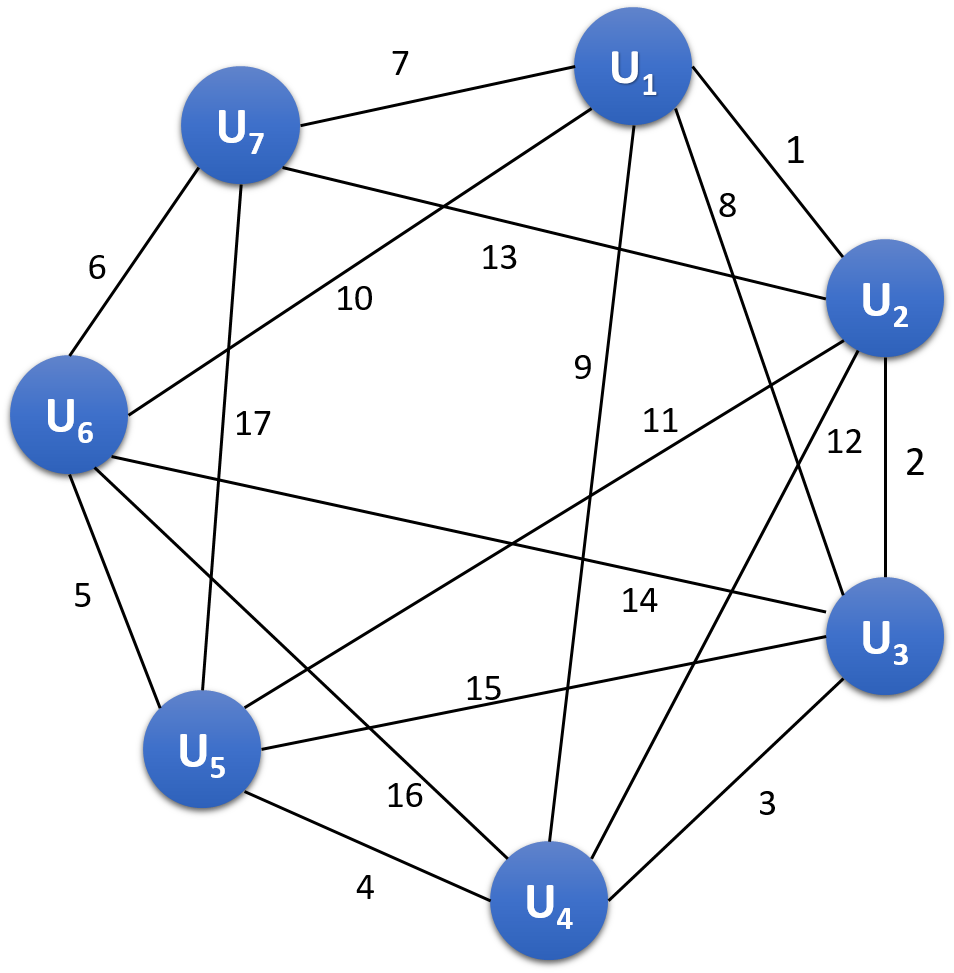}
    \caption{Partial Regular graph PRG($n = 7$ ,$d = 5$). Here all vertices (except $U_7$) have degree 5. Each vertex represents a node and edges represent vectors corresponding to the common packet between the nodes.}
    \label{1}
\end{figure}

  \begin{figure}
      \centering
       \includegraphics[scale=0.3]{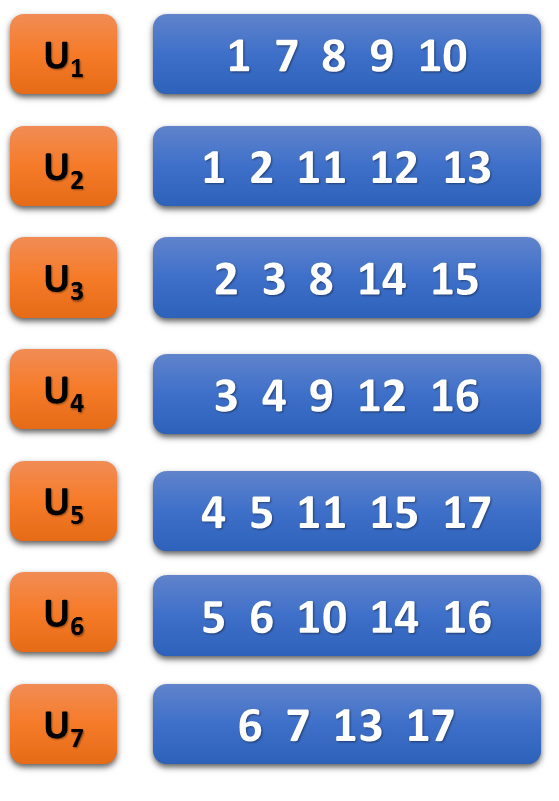}
  \captionof{figure} {Weak Fractional Repetition code for PRG(7,5) given in figure-\ref{1}. Node capacity of each node except $U_7$ is $\alpha=5$ and node $U_7$ has capacity $\alpha-1 = 4$.}
  \label{2}
  \end{figure}
 
\section{Ring construction for $\rho \geq 2$}
In this section we generalize the ring construction described in \cite{DBLP:journals/corr/abs-1302-3681} for $\rho \geq 2$. So if we have $\theta$ number of packets to be distributed among $n$ nodes such that each packet is replicated $\rho$ times then we place $n$ nodes on circular ring and then start placing packets starting from first node in between the successive nodes for $\rho-1$ time. So first packet will be placed between $\rho-1$ pairs of successive nodes. Here $\theta = qn + r , 0 \leq r \leq n-1$ If $r=0$ then this construction will give FR codes else it'll give WFR codes. We also give the details about the parameters which give \textit{universally good codes}. Example of FR code with $\rho = 3$ based on the ring construction is given below.

\begin{example}
Figure \ref{4} is an example of ring construction for distributing $20$ packets among $9$ nodes with replication factor $\rho=3$. Place packet $1$ between node $U_0,U_1$ and $U_1,U_2$. Start placing other packets on between the nodes in this manner. Figure \ref{5} is weak FR code based on this ring with parameters $n = 9, \rho = 3, \theta = 20$.
\end{example}

\begin{theorem}
Reconstruction degree of FR code where $\rho \geq 2$ for homogeneous DSSs  constructed using ring construction is as follows:

\[
k = \begin{cases}
n-\rho  &\text{if $n = \theta$} \\
n-\rho+1 &\text{if $n = m\theta, m>1, m \in N$}\\
 \end{cases}
\]

\label{th-3}
\end{theorem}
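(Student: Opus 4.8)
The plan is to reduce the statement to an elementary covering question on a cycle. In the ring construction the $n$ nodes lie on a cycle and a packet occupies $\rho-1$ consecutive gaps, so the $\rho$ nodes storing a given packet $P_j$ form an \emph{arc}, a block of $\rho$ consecutive nodes. A node stores $P_j$ exactly when it belongs to that arc, hence a set $S$ of contacted nodes recovers $P_j$ iff $S$ meets its arc; equivalently $P_j$ is \emph{lost} iff its whole arc lies inside the unavailable set $\bar S=[n]\setminus S$, which has size $n-k$. Therefore the file is reconstructible from $S$ precisely when the number of distinct lost packets stays within the erasure-correcting capacity of the outer $[\theta,\cdot]$ MDS code, and $k$ is pinned down by the largest $n-k$ for which this holds for \emph{every} choice of $\bar S$.

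The single combinatorial fact I would record first is that two distinct arcs of length $\rho$ have union of size at least $\rho+1$, with equality exactly when that union is a block of $\rho+1$ consecutive nodes. Two immediate consequences: a subset of the cycle of size at most $\rho$ contains at most one length-$\rho$ arc (and only if it equals that arc), and a subset of size $\rho+1$ contains two distinct length-$\rho$ arcs only if it is a block of $\rho+1$ consecutive nodes; more generally a block of $t$ consecutive unavailable nodes is the worst case among all unavailable sets of size $t$, containing $\max(t-\rho+1,0)$ distinct arcs.

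For $n=\theta$ one checks from the placement rule that the $\theta=n$ packet arcs are exactly the $n$ distinct length-$\rho$ arcs of the cycle, one packet per arc, and the outer code being $[\theta,\theta-1]$ the file needs $\theta-1$ distinct packets, i.e. at most one lost packet. If $|\bar S|\le\rho$ then $\bar S$ contains at most one arc, so at most one packet is lost and the file is recovered; and choosing $\bar S$ to be a block of $\rho+1$ consecutive nodes loses two distinct packets, which the code cannot correct. This forces $k=n-\rho$. For $n=m\theta$ with $m>1$ the same analysis applies, except that here covering a single packet arc already destroys more distinct packets than the outer code can absorb, so $\bar S$ must avoid every length-$\rho$ arc entirely; this requires $|\bar S|\le\rho-1$, while taking $\bar S$ to be a packet arc (size $\rho$) shows $|\bar S|=\rho$ fails. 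This gives $k=n-\rho+1$.

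There is no conceptually hard step; the care needed is entirely in the bookkeeping: (i) checking that the cyclic block really is the extremal unavailable set, so that testing intervals suffices; (ii) matching each of the two regimes to the exact parameters of the outer MDS code and, in the $m>1$ case, to the multiplicity with which packets reuse each arc; and (iii) handling the wraparound on the cycle and the small or degenerate values of $n$ and $\rho$ so that the two formulas hold exactly as stated.
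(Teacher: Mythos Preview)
Your argument is correct and takes a genuinely different route from the paper's. The paper reasons via the circulant (respectively block-circulant) structure of the node--packet incidence matrix: it observes that the diagonal entries are all $1$, so contacting any $n-\rho$ nodes retrieves the corresponding $n-\rho$ ``diagonal'' packets, and then asserts without real justification that one further off-diagonal packet is always picked up; it addresses only sufficiency and says nothing about why $k-1$ nodes would not suffice. Your dual reformulation---packets as length-$\rho$ arcs on the cycle, a packet lost exactly when its arc lies entirely in the unavailable set $\bar S$---handles both directions at once: the pigeonhole fact that a set of size at most $\rho$ contains at most one arc gives sufficiency, and exhibiting an interval $\bar S$ of size $\rho+1$ (respectively a single arc of size $\rho$) that swallows two packets (respectively the $m$ packets sharing that arc) gives tightness. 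This is more elementary, more complete, and makes the role of the $[\theta,\theta-1]$ outer code explicit.

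One remark: the second case of the statement reads $n=m\theta$, but the paper's own worked example ($n=5,\ \theta=10$) and its tables show the intended regime is $\theta=mn$. Your ``multiplicity'' reasoning---each of the $n$ arcs carries $m\ge 2$ packets, so losing even one arc already exceeds the single erasure the outer code tolerates---is exactly right for that intended case; just make the substitution $\theta=mn$ explicit so the reader is not tripped up by the typo you inherited from the statement.
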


\begin{proof}
Observe that the node packet incidence matrix of any FR code of ring construction is either circulant or block circulant.
For $n = \theta$, all diagonal elements in the node packet incidence matrix (for FR code $(n=5, k=3, d=2)$, where $\rho = 2$ and $\theta = 5$) as given in Equation (\ref{n=t}) is 1. So by contacting any $n-\rho$ nodes, one can get distinct $n-\rho$ packets(which are diagonal elements). Since $n = \theta$ each node will have exactly $\rho$ distinct packets. thus there will be always one extra packet which is not on diagonal. So by contacting any $n-\rho$ node one can get $\theta-1$ distinct packets to reconstruct the entire file.


For $n = m\theta$, the node packet incidence matrix (for FR code $(n=5, k=3, d=2)$, where $\rho = 2$ and $\theta = 10$) can be divided in $m$ blocks of $n \times \theta$ size as shown in equation (\ref{n=2t}). All the blocks are exactly same as for $n = \theta$ matrix. So using the above proof we can say that if we contact any $n-\rho$ nodes we can get $(\theta/m)-1$ packet from each block. We have $m$ blocks,so by contacting $n-\rho$ nodes we can get $\theta - m$ distinct packets. Contacting $1$ extra node will get $m$ more packets(diagonal packet). So, by contacting any $n- \rho +1$ nodes one can get at least $m\theta-1$ packet out of $m\theta$ packet which is required to reconstruct the entire file.

\begin{equation}
M_{5\times 5} =
\begin{bmatrix}
1 & 0 & 0 & 0 & 1 \\
1 & 1 & 0 & 0 & 0 \\
0 & 1 & 1 & 0 & 0 \\
0 & 0 & 1 & 1 & 0 \\
0 & 0 & 0 & 1 & 1 \\
\end{bmatrix}_{5\times 5}\\
\label{n=t}
\end{equation}

\begin{equation}
M_{5\times 10} =
\begin{bmatrix}
1 & 0 & 0 & 0 & 1 & 1 & 0 & 0 & 0 & 1\\
1 & 1 & 0 & 0 & 0 & 1 & 1 & 0 & 0 & 0\\
0 & 1 & 1 & 0 & 0 & 0 & 1 & 1 & 0 & 0 \\
0 & 0 & 1 & 1 & 0 & 0 & 0 & 1 & 1 & 0\\
0 & 0 & 0 & 1 & 1 & 0 & 0 & 0 & 1 & 1\\
\end{bmatrix}_{5\times 10} \\
\label{n=2t}
\end{equation}
\end{proof}

Reconstruction degree of FR code for homogeneous DSSs is discussed, now one has to check whether constructed FR codes are universally good or not.For universally good codes, one has to prove that constructed codes with $[\theta,\theta-1]$ MDS inner codes and $(n, k, d)$ inner FR codes satisfy the equation-(\ref{eq1}).\\
\textbf{Case-1: $n = \theta$}:\\
As discussed in Theorem-\ref{th-3} for $n = \theta$, reconstruction degree $k$ = $n - \rho$. Now $M(k) = \theta-1$, $k = \theta-\rho$ and (using Equation-(\ref{eq_4})) $d = \alpha = \rho$. After substituting the as above in Equation-(\ref{eq1}), we can get the following inequality.
\[
3\rho^2+\theta^2-4\rho\theta+\theta+\rho-2 \geq 0.
\]
The solution of the inequality for $\rho$ and $\theta$ will give the desired result. Some values for are given in Tables \ref{t_r4},\ref{t_r3},\ref{t_r2}.
\textbf{Case-2: $n = m\theta$}:\\
As discussed in Theorem-\ref{th-3} for $n = m\theta$, $m > 1, m \in N$, reconstruction degree $k$ = $n-\rho+1$. Now $M(k) = \theta-1$, $k = m\theta-\rho+1$ and (using Equation-(\ref{eq_4})) $d = \alpha = \dfrac{\rho}{m}$. After substituting the as above in Equation-(\ref{eq1}), we can get the following inequality.
\[
\begin{array}{c}
  m^3 \theta^2+(m+2)\rho^2-2m^2 \theta\rho+m^2 \theta\\
 -(m+2)\rho-2m\theta \rho+2m\theta-2m \geq 0. 
\end{array}
\]

The solution of the inequality for $\rho$, $m$ and $\theta$ will give the desired result. Some values for are given in Tables \ref{t_r4},\ref{t_r3},\ref{t_r2}.

We discussed the reconstruction degree of FR codes for homogeneous DSSs. Now, we give a conjecture for heterogeneous DSSs.

\begin{theorem}
The reconstruction degree of FR code for heterogeneous DSSs constructed using ring construction is as follows.
\[
k = \begin{cases}
n-\rho & \text{if $n > \theta$} \\
n-\rho+1 & \text{if $n < \theta \neq m\theta, m \in N$}
\end{cases}
\]
\end{theorem}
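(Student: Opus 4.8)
The plan is to follow the route of Theorem~\ref{th-3}, arguing directly about the node--packet incidence matrix of the ring construction. First I would pin down its combinatorial shape in the heterogeneous regime. Labelling the nodes $U_0,\dots,U_{n-1}$ around the ring and the packets $P_1,\dots,P_\theta$ in the order they are laid down, packet $P_j$ is stored exactly on the $\rho$ consecutive nodes $U_{(j-1)(\rho-1)},\,U_{(j-1)(\rho-1)+1},\,\dots,\,U_{(j-1)(\rho-1)+\rho-1}$, all indices read modulo $n$. So every column of the incidence matrix is an interval of $\rho$ ones wrapped on the $n$-cycle, and consecutive columns are shifted by $\rho-1$: when $n=\theta$ this is the circulant matrix of Theorem~\ref{th-3}, when one of $n,\theta$ divides the other it is block-circulant, and in the remaining cases ($n>\theta$, or $n<\theta$ with $n\nmid\theta$) the $\theta$ windows sweep the cycle a non-integer number of times, producing a quasi-circulant pattern with one partial block.

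Next I would recast the reconstruction degree as a covering statement on the ring, exactly as in Theorem~\ref{th-3}. With the $[\theta,\theta-1]$ outer MDS code, a set $S$ of contacted nodes rebuilds the file iff it misses at most one packet, i.e.\ iff at most one packet has its whole $\rho$-node window inside the complement $\overline{S}$; and since every window is an arc, a short coalescing argument lets one take the complement that destroys the most packets to be a single arc. Thus everything reduces to: how many of $P_1,\dots,P_\theta$ can have their $\rho$-node window inside a fixed arc of a given length? An arc of length $\le\rho-1$ contains none; an arc of length exactly $\rho$ contains exactly the packets whose window equals that arc; longer arcs contain the packets whose window-start lies in a correspondingly longer block of positions.

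The two cases then come out of counting how many packets share a window, which is controlled by the fact that $P_j$ and $P_{j'}$ have the same window iff $n\mid(\rho-1)(j-j')$. If $n>\theta$, no window repeats, so deleting any $\rho$ nodes kills at most one packet; hence any $n-\rho$ contacted nodes suffice to rebuild the file, and deleting $\rho+1$ suitably placed nodes kills two packets, making $k=n-\rho$ best possible. If $n<\theta$, wrapping forces some window to carry at least two packets, so $k=n-\rho$ already fails; on the other hand $n-\rho+1$ contacted nodes leave an arc of only $\rho-1$ deleted nodes, which cannot contain any $\rho$-node window and therefore loses nothing, giving $k=n-\rho+1$. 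The homogeneous Theorem~\ref{th-3} is recovered as the boundary $n=\theta$ (and $\theta=mn$).

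The step I expect to be the genuine obstacle is exactly ``no window repeats when $n>\theta$, while some window is used by at least two packets'', because both halves hinge on the arithmetic of $n$, $\rho$, $\theta$: distinctness of the $\theta$ windows really wants a hypothesis such as $\gcd(n,\rho-1)=1$, together with a non-degeneracy condition $(\rho-1)\theta\ge n$ (otherwise some node stores no packet and ``contact any $k$ nodes'' is vacuous), and the tightness half when $n>\theta$ needs two of the window-starts $(j-1)(\rho-1)\bmod n$ to fall on adjacent ring positions, which can fail for small $\theta$. I would therefore establish the formula under $\gcd(n,\rho-1)=1$ and $(\rho-1)\theta\ge n$, and expect it to be precisely these side conditions --- rather than any deeper phenomenon --- that keep the fully general statement at the level of a conjecture.
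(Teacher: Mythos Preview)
There is nothing in the paper to compare your proposal against: despite the \emph{theorem} environment, the sentence introducing this statement reads ``Now, we give a \emph{conjecture} for heterogeneous DSSs,'' and the Conclusion repeats that ``a conjecture for the reconstruction degree of FR codes (constructed using ring construction) is given.'' The paper supplies no argument at all for the heterogeneous case, so your sketch already goes strictly further than what the authors provide.

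On the substance of your sketch: the reduction to arcs on the $n$-cycle is the natural extension of the Theorem~\ref{th-3} argument, and the easy halves go through cleanly. In particular, your observation that deleting any $\rho-1$ nodes cannot swallow an entire $\rho$-window --- so that $n-\rho+1$ contacted nodes always recover all $\theta$ packets --- is unconditional and gives $k\le n-\rho+1$ in the $n<\theta$ case. The matching lower bound there also works without extra hypotheses: since the $\theta$ start positions $(j-1)(\rho-1)\bmod n$ live in $\ZZ/n\ZZ$ and $\theta>n$, pigeonhole forces two packets to share a window, so deleting those $\rho$ nodes kills at least two packets and $n-\rho$ contacted nodes can fail. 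You need not invoke $\gcd(n,\rho-1)$ here.

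Where your caution is warranted is the $n>\theta$ branch. ``No window repeats'' is equivalent to $\theta\le n/\gcd(n,\rho-1)$, which is not implied by $n>\theta$ alone; and even when windows are distinct, the tightness step --- producing two packets whose start positions differ by exactly $1$ on the ring so that a $(\rho+1)$-arc kills both --- amounts to solving $(\rho-1)(j-j')\equiv\pm1\pmod n$ with $1\le j,j'\le\theta$, which can fail for small $\theta$ or when $\gcd(n,\rho-1)>1$. Your proposed side hypotheses are precisely the kind of conditions one needs to push the argument through, and their necessity is presumably why the authors recorded the statement as a conjecture rather than a theorem with proof.
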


\begin{table}
	\caption{Parameters for universally good FR codes using ring construction for $\rho = 4$}
	\centering 
	\begin{tabular}{|c|c|c|c|}
	\hline
	 n & k & d & $\theta$ \\ 
	 \hline
 10 & 6 & 4 & 10 \\
\hline
11 & 7 & 4 & 11 \\
\hline
12 & 8 & 4 & 12 \\
\hline
13 & 9 & 4 & 13 \\
\hline
14 & 10 & 4 & 14 \\
\hline
15 & 11 & 4 & 15 \\
\hline
16 & 12 & 4 & 16 \\
\hline

\end{tabular}
 \label{t_r4}
\end{table}

\begin{table}
	\caption{Parameters for universally good FR codes using ring construction for $\rho = 3$}
	\centering 
	\begin{tabular}{|c|c|c|c|}
   \hline
    n & k & d & $\theta$ \\ 
    \hline
7 & 4 & 3 & 7 \\
\hline
8 & 5 & 3 & 8 \\
\hline
9 & 6 & 3 & 9 \\
\hline
10 & 7 & 3 & 10 \\
\hline
11 & 8 & 3 & 11 \\
\hline
11 & 9 & 6 & 22 \\
\hline
12 & 9 & 3 & 12 \\
\hline
12 & 10 & 6 & 24 \\
\hline
13 & 10 & 3 & 13 \\
\hline
13 & 11 & 6 & 26 \\
\hline
14 & 11 & 3 & 14 \\
\hline
14 & 12 & 6 & 28 \\
\hline
15 & 12 & 3 & 15 \\
\hline
16 & 13 & 3 & 16 \\
\hline

\end{tabular}
 \label{t_r3}
\end{table}

\begin{table}
	\caption{Parameters for universally good FR codes using ring construction for $\rho = 2$}
	\centering 
	\begin{tabular}{|c|c|c|c|}
   \hline
    n & k & d & $\theta$ \\ 
    \hline
   4 & 2 & 2 & 4 \\
\hline
5 & 3 & 2 & 5 \\
\hline
6 & 4 & 2 & 6 \\
\hline
6 & 5 & 4 & 12 \\
\hline
7 & 5 & 2 & 7 \\
\hline
7 & 6 & 4 & 14 \\
\hline
8 & 6 & 2 & 8 \\
\hline
8 & 7 & 4 & 16 \\
\hline
8 & 7 & 6 & 24 \\
\hline
9 & 7 & 2 & 9 \\
\hline
9 & 8 & 4 & 18 \\
\hline
9 & 8 & 6 & 27 \\
\hline
10 & 8 & 2 & 10 \\
\hline
10 & 9 & 4 & 20 \\
\hline
10 & 9 & 6 & 30 \\
\hline
11 & 9 & 2 & 11 \\
\hline
11 & 10 & 4 & 22 \\
\hline
11 & 10 & 6 & 33 \\
\hline
12 & 10 & 2 & 12 \\
\hline
12 & 11 & 4 & 24 \\
\hline
13 & 11 & 2 & 13 \\
\hline
13 & 12 & 4 & 26 \\
\hline
14 & 12 & 2 & 14 \\
\hline
15 & 13 & 2 & 15 \\
\hline

\end{tabular}
 \label{t_r2}
\end{table}

\begin{figure}
    \centering
    \includegraphics[scale=0.6]{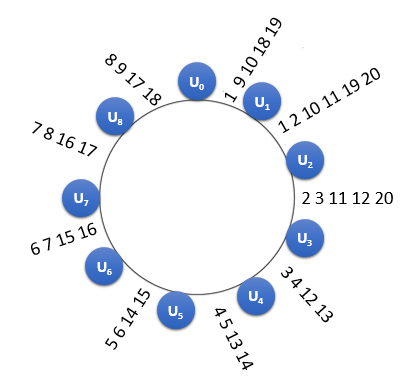}
    \caption{Generalized ring construction for $\theta = 20$, $\rho=3$ and $n=9$. Each packet is placed between consecutive nodes for $rho-1$ times.}
    \label{4}
\end{figure}
\begin{figure}
    \centering
    \includegraphics[scale=0.3]{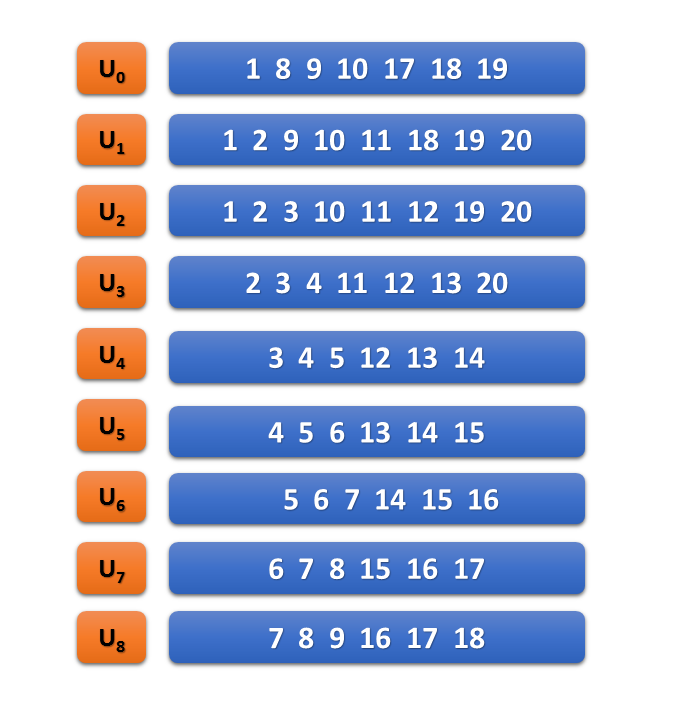}
    \caption{Weak Fractional Repetition code based on generalized ring construction for $\theta = 20$, $\rho = 3$ and $n = 9$.}
    \label{5}
\end{figure}
\section{T-construction results}
In Proposition 2 of \cite{DBLP:journals/corr/abs-1302-3681}, researchers have discussed about the unique approach for the construction of homogeneous FR codes where, $\rho = d$ and $n = \theta$. In this section we give parameters of FR codes constructed using this method which are universally good codes. Here, we observe that for different $t$ values constructed FR codes have same parameters. In Table-\ref{t13} one can observe that for $t=0$ and $t=2$ constructed FR codes have same parameters $(n=4, k=2, d=2, \theta=4)$.

\begin{table}
	\caption{Parameters for universally good FR codes using t-construction}
	\centering 
	\begin{tabular}{|c|c|c|c|c|c|}
   \hline
    n & k & d & $\rho$ & $\theta$ & t \\ 
    \hline
  4 & 2 & 2 & 2 & 4 & 0 \\
\hline
4 & 3 & 2 & 2 & 4 & 1 \\
\hline
4 & 2 & 2 & 2 & 4 & 2 \\
\hline
4 & 3 & 2 & 2 & 4 & 3 \\
\hline
4 & 2 & 2 & 2 & 4 & 4 \\
\hline
5 & 3 & 2 & 2 & 5 & 0 \\
\hline
5 & 4 & 2 & 2 & 5 & 1 \\
\hline
5 & 3 & 2 & 2 & 5 & 2 \\
\hline
5 & 3 & 2 & 2 & 5 & 3 \\
\hline
5 & 3 & 2 & 2 & 5 & 4 \\
\hline
6 & 4 & 2 & 2 & 6 & 0 \\
\hline
6 & 5 & 2 & 2 & 6 & 1 \\
\hline
6 & 4 & 2 & 2 & 6 & 2 \\
\hline
6 & 4 & 2 & 2 & 6 & 3 \\
\hline
6 & 5 & 2 & 2 & 6 & 4 \\
\hline
6 & 5 & 3 & 3 & 6 & 1 \\
\hline
6 & 5 & 3 & 3 & 6 & 4 \\
\hline
7 & 5 & 2 & 2 & 7 & 0 \\
\hline
7 & 6 & 2 & 2 & 7 & 1 \\
\hline
7 & 5 & 2 & 2 & 7 & 2 \\
\hline
7 & 5 & 2 & 2 & 7 & 3 \\
\hline
7 & 5 & 2 & 2 & 7 & 4 \\
\hline
7 & 5 & 3 & 3 & 7 & 0 \\
\hline
7 & 6 & 3 & 3 & 7 & 1 \\
\hline
7 & 3 & 3 & 3 & 7 & 2 \\
\hline
7 & 4 & 3 & 3 & 7 & 3 \\
\hline
7 & 3 & 3 & 3 & 7 & 4 \\
\hline
8 & 6 & 2 & 2 & 8 & 0 \\
\hline
8 & 7 & 2 & 2 & 8 & 1 \\
\hline
8 & 6 & 2 & 2 & 8 & 2 \\
\hline
8 & 6 & 2 & 2 & 8 & 3 \\
\hline
8 & 6 & 2 & 2 & 8 & 4 \\
\hline
8 & 6 & 3 & 3 & 8 & 0 \\
\hline
8 & 7 & 3 & 3 & 8 & 1 \\
\hline
8 & 4 & 3 & 3 & 8 & 2 \\
\hline
8 & 6 & 3 & 3 & 8 & 3 \\
\hline
8 & 5 & 3 & 3 & 8 & 4 \\
\hline
8 & 7 & 4 & 4 & 8 & 1 \\
\hline
9 & 7 & 2 & 2 & 9 & 0 \\
\hline
9 & 8 & 2 & 2 & 9 & 1 \\
\hline
9 & 7 & 2 & 2 & 9 & 2 \\
\hline
9 & 7 & 2 & 2 & 9 & 3 \\
\hline
9 & 7 & 2 & 2 & 9 & 4 \\
\hline
9 & 7 & 3 & 3 & 9 & 0 \\
\hline
9 & 8 & 3 & 3 & 9 & 1 \\
\hline
9 & 5 & 3 & 3 & 9 & 2 \\
\hline
9 & 5 & 3 & 3 & 9 & 3 \\
\hline
9 & 7 & 3 & 3 & 9 & 4 \\
\hline
9 & 7 & 4 & 4 & 9 & 0 \\
\hline
9 & 8 & 4 & 4 & 9 & 1 \\
\hline
9 & 7 & 4 & 4 & 9 & 4 \\
\hline
10 & 8 & 2 & 2 & 10 & 0 \\
\hline
10 & 9 & 2 & 2 & 10 & 1 \\
\hline
10 & 8 & 2 & 2 & 10 & 2 \\
\hline
10 & 8 & 2 & 2 & 10 & 3 \\
\hline
10 & 8 & 2 & 2 & 10 & 4 \\
\hline
10 & 8 & 3 & 3 & 10 & 0 \\
\hline
10 & 9 & 3 & 3 & 10 & 1 \\
\hline
10 & 6 & 3 & 3 & 10 & 2 \\
\hline
10 & 7 & 3 & 3 & 10 & 3 \\
\hline
10 & 7 & 3 & 3 & 10 & 4 \\
\hline
10 & 8 & 4 & 4 & 10 & 0 \\
\hline
10 & 9 & 4 & 4 & 10 & 1 \\
\hline
10 & 6 & 4 & 4 & 10 & 3 \\
\hline
10 & 7 & 4 & 4 & 10 & 4 \\
\hline
11 & 9 & 2 & 2 & 11 & 0 \\
\hline
11 & 10 & 2 & 2 & 11 & 1 \\
\hline
11 & 9 & 2 & 2 & 11 & 2 \\
\hline
11 & 9 & 2 & 2 & 11 & 3 \\
\hline
11 & 9 & 2 & 2 & 11 & 4 \\
\hline
11 & 9 & 3 & 3 & 11 & 0 \\
\hline
11 & 10 & 3 & 3 & 11 & 1 \\
\hline
11 & 7 & 3 & 3 & 11 & 2 \\
\hline
11 & 7 & 3 & 3 & 11 & 3 \\
\hline
11 & 7 & 3 & 3 & 11 & 4 \\
\hline
11 & 9 & 4 & 4 & 11 & 0 \\
\hline
11 & 10 & 4 & 4 & 11 & 1 \\
\hline
11 & 6 & 4 & 4 & 11 & 2 \\
\hline
11 & 6 & 4 & 4 & 11 & 3 \\
\hline
11 & 6 & 4 & 4 & 11 & 4 \\
\hline

\end{tabular}
 \label{t1}
\end{table}

\begin{table}
	\caption{Parameters for universally good FR codes using t-construction}
	\centering 
	\begin{tabular}{|c|c|c|c|c|c|}
   \hline
    n & k & d & $\rho$ & $\theta$ & t \\ 
    \hline
 12 & 10 & 2 & 2 & 12 & 0 \\
\hline
12 & 11 & 2 & 2 & 12 & 1 \\
\hline
12 & 10 & 2 & 2 & 12 & 2 \\
\hline
12 & 10 & 2 & 2 & 12 & 3 \\
\hline
12 & 10 & 2 & 2 & 12 & 4 \\
\hline
12 & 10 & 3 & 3 & 12 & 0 \\
\hline
12 & 11 & 3 & 3 & 12 & 1 \\
\hline
12 & 8 & 3 & 3 & 12 & 2 \\
\hline
12 & 9 & 3 & 3 & 12 & 3 \\
\hline
12 & 10 & 3 & 3 & 12 & 4 \\
\hline
12 & 10 & 4 & 4 & 12 & 0 \\
\hline
12 & 11 & 4 & 4 & 12 & 1 \\
\hline
12 & 7 & 4 & 4 & 12 & 2 \\
\hline
12 & 9 & 4 & 4 & 12 & 3 \\
\hline
12 & 10 & 4 & 4 & 12 & 4 \\
\hline
13 & 11 & 2 & 2 & 13 & 0 \\
\hline
13 & 12 & 2 & 2 & 13 & 1 \\
\hline
13 & 11 & 2 & 2 & 13 & 2 \\
\hline
13 & 11 & 2 & 2 & 13 & 3 \\
\hline
13 & 11 & 2 & 2 & 13 & 4 \\
\hline
13 & 11 & 3 & 3 & 13 & 0 \\
\hline
13 & 12 & 3 & 3 & 13 & 1 \\
\hline
13 & 9 & 3 & 3 & 13 & 2 \\
\hline
13 & 9 & 3 & 3 & 13 & 3 \\
\hline
13 & 9 & 3 & 3 & 13 & 4 \\
\hline
13 & 11 & 4 & 4 & 13 & 0 \\
\hline
13 & 12 & 4 & 4 & 13 & 1 \\
\hline
13 & 8 & 4 & 4 & 13 & 2 \\
\hline
13 & 9 & 4 & 4 & 13 & 3 \\
\hline
13 & 8 & 4 & 4 & 13 & 4 \\
\hline
14 & 12 & 2 & 2 & 14 & 0 \\
\hline
14 & 13 & 2 & 2 & 14 & 1 \\
\hline
14 & 12 & 2 & 2 & 14 & 2 \\
\hline
14 & 12 & 2 & 2 & 14 & 3 \\
\hline
14 & 12 & 2 & 2 & 14 & 4 \\
\hline
14 & 12 & 3 & 3 & 14 & 0 \\
\hline
14 & 10 & 3 & 3 & 14 & 2 \\
\hline
14 & 11 & 3 & 3 & 14 & 3 \\
\hline
14 & 10 & 3 & 3 & 14 & 4 \\
\hline
14 & 12 & 4 & 4 & 14 & 0 \\
\hline
14 & 9 & 4 & 4 & 14 & 2 \\
\hline
14 & 9 & 4 & 4 & 14 & 3 \\
\hline
14 & 9 & 4 & 4 & 14 & 4 \\
\hline
15 & 13 & 2 & 2 & 15 & 0 \\
\hline
15 & 13 & 2 & 2 & 15 & 2 \\
\hline
15 & 13 & 2 & 2 & 15 & 3 \\
\hline
15 & 13 & 2 & 2 & 15 & 4 \\
\hline
15 & 11 & 3 & 3 & 15 & 2 \\
\hline
15 & 11 & 3 & 3 & 15 & 3 \\
\hline
15 & 9 & 4 & 4 & 15 & 2 \\
\hline
15 & 10 & 4 & 4 & 15 & 3 \\
\hline
16 & 12 & 3 & 3 & 16 & 2 \\
\hline
16 & 13 & 3 & 3 & 16 & 3 \\
\hline
16 & 12 & 3 & 3 & 16 & 4 \\
\hline
16 & 10 & 4 & 4 & 16 & 2 \\
\hline
16 & 12 & 4 & 4 & 16 & 4 \\
\hline
17 & 13 & 3 & 3 & 17 & 2 \\
\hline
17 & 13 & 3 & 3 & 17 & 3 \\
\hline
17 & 13 & 3 & 3 & 17 & 4 \\
\hline
17 & 11 & 4 & 4 & 17 & 2 \\
\hline
17 & 12 & 4 & 4 & 17 & 3 \\
\hline
17 & 12 & 4 & 4 & 17 & 4 \\
\hline
18 & 12 & 4 & 4 & 18 & 2 \\
\hline

\end{tabular}
 \label{t2}
\end{table}

\begin{table}
	\caption{Parameters for universally good FR codes using t-construction for $RHS > 0$ in Equation (\ref{eq1})}
	\centering 
	\begin{tabular}{|c|c|c|c|c|c|}
   \hline
    n & k & d & $\rho$ & $\theta$ & t \\ 
    \hline
 4 & 2 & 2 & 2 & 4 & 0 \\
\hline
4 & 3 & 2 & 2 & 4 & 1 \\
\hline
4 & 2 & 2 & 2 & 4 & 2 \\
\hline
4 & 3 & 2 & 2 & 4 & 3 \\
\hline
4 & 2 & 2 & 2 & 4 & 4 \\
\hline
5 & 3 & 2 & 2 & 5 & 0 \\
\hline
5 & 4 & 2 & 2 & 5 & 1 \\
\hline
5 & 3 & 2 & 2 & 5 & 2 \\
\hline
5 & 3 & 2 & 2 & 5 & 3 \\
\hline
5 & 3 & 2 & 2 & 5 & 4 \\
\hline
6 & 4 & 2 & 2 & 6 & 0 \\
\hline
6 & 5 & 2 & 2 & 6 & 1 \\
\hline
6 & 4 & 2 & 2 & 6 & 2 \\
\hline
6 & 4 & 2 & 2 & 6 & 3 \\
\hline
6 & 5 & 2 & 2 & 6 & 4 \\
\hline
6 & 5 & 3 & 3 & 6 & 1 \\
\hline
6 & 5 & 3 & 3 & 6 & 4 \\
\hline
7 & 5 & 2 & 2 & 7 & 0 \\
\hline
7 & 5 & 2 & 2 & 7 & 2 \\
\hline
7 & 5 & 2 & 2 & 7 & 3 \\
\hline
7 & 5 & 2 & 2 & 7 & 4 \\
\hline
7 & 5 & 3 & 3 & 7 & 0 \\
\hline
7 & 6 & 3 & 3 & 7 & 1 \\
\hline
7 & 3 & 3 & 3 & 7 & 2 \\
\hline
7 & 4 & 3 & 3 & 7 & 3 \\
\hline
7 & 3 & 3 & 3 & 7 & 4 \\
\hline
8 & 6 & 3 & 3 & 8 & 0 \\
\hline
8 & 7 & 3 & 3 & 8 & 1 \\
\hline
8 & 4 & 3 & 3 & 8 & 2 \\
\hline
8 & 6 & 3 & 3 & 8 & 3 \\
\hline
8 & 5 & 3 & 3 & 8 & 4 \\
\hline
8 & 7 & 4 & 4 & 8 & 1 \\
\hline
9 & 7 & 3 & 3 & 9 & 0 \\
\hline
9 & 5 & 3 & 3 & 9 & 2 \\
\hline
9 & 5 & 3 & 3 & 9 & 3 \\
\hline
9 & 7 & 3 & 3 & 9 & 4 \\
\hline
9 & 7 & 4 & 4 & 9 & 0 \\
\hline
9 & 8 & 4 & 4 & 9 & 1 \\
\hline
9 & 7 & 4 & 4 & 9 & 4 \\
\hline
10 & 6 & 3 & 3 & 10 & 2 \\
\hline
10 & 7 & 3 & 3 & 10 & 3 \\
\hline
10 & 7 & 3 & 3 & 10 & 4 \\
\hline
10 & 8 & 4 & 4 & 10 & 0 \\
\hline
10 & 9 & 4 & 4 & 10 & 1 \\
\hline
10 & 6 & 4 & 4 & 10 & 3 \\
\hline
10 & 7 & 4 & 4 & 10 & 4 \\
\hline
11 & 7 & 3 & 3 & 11 & 2 \\
\hline
11 & 7 & 3 & 3 & 11 & 3 \\
\hline
11 & 7 & 3 & 3 & 11 & 4 \\
\hline
11 & 9 & 4 & 4 & 11 & 0 \\
\hline
11 & 6 & 4 & 4 & 11 & 2 \\
\hline
11 & 6 & 4 & 4 & 11 & 3 \\
\hline
11 & 6 & 4 & 4 & 11 & 4 \\
\hline
12 & 7 & 4 & 4 & 12 & 2 \\
\hline
12 & 9 & 4 & 4 & 12 & 3 \\
\hline
13 & 8 & 4 & 4 & 13 & 2 \\
\hline
13 & 9 & 4 & 4 & 13 & 3 \\
\hline
13 & 8 & 4 & 4 & 13 & 4 \\
\hline
14 & 13 & 2 & 2 & 14 & 1 \\
\hline
14 & 9 & 4 & 4 & 14 & 2 \\
\hline
14 & 9 & 4 & 4 & 14 & 3 \\
\hline
14 & 9 & 4 & 4 & 14 & 4 \\
\hline
15 & 13 & 2 & 2 & 15 & 0 \\
\hline
15 & 13 & 2 & 2 & 15 & 2 \\
\hline
15 & 13 & 2 & 2 & 15 & 3 \\
\hline
15 & 13 & 2 & 2 & 15 & 4 \\
\hline
15 & 9 & 4 & 4 & 15 & 2 \\
\hline
16 & 13 & 3 & 3 & 16 & 3 \\
\hline
17 & 13 & 3 & 3 & 17 & 2 \\
\hline
17 & 13 & 3 & 3 & 17 & 3 \\
\hline
17 & 13 & 3 & 3 & 17 & 4 \\
\hline
\end{tabular}
\label{t13}
\end{table}

\begin{table}
	\caption{Parameters for universally good FR codes using t-construction for $RHS > 0$ in Equation (\ref{eq1}) and no repetition of codes with same parameters}
	\centering 
	\begin{tabular}{|c|c|c|c|c|c|}
   \hline
    n & k & d & $\rho$ & $\theta$ & t \\ 
    \hline
4 & 2 & 2 & 2 & 4 & 0 \\
\hline
4 & 3 & 2 & 2 & 4 & 1 \\
\hline
5 & 3 & 2 & 2 & 5 & 0 \\
\hline
5 & 4 & 2 & 2 & 5 & 1 \\
\hline
6 & 4 & 2 & 2 & 6 & 0 \\
\hline
6 & 5 & 2 & 2 & 6 & 1 \\
\hline
6 & 5 & 3 & 3 & 6 & 1 \\
\hline
7 & 5 & 2 & 2 & 7 & 0 \\
\hline
7 & 5 & 3 & 3 & 7 & 0 \\
\hline
7 & 6 & 3 & 3 & 7 & 1 \\
\hline
7 & 3 & 3 & 3 & 7 & 2 \\
\hline
7 & 4 & 3 & 3 & 7 & 3 \\
\hline
8 & 6 & 3 & 3 & 8 & 0 \\
\hline
8 & 7 & 3 & 3 & 8 & 1 \\
\hline
8 & 4 & 3 & 3 & 8 & 2 \\
\hline
8 & 5 & 3 & 3 & 8 & 4 \\
\hline
8 & 7 & 4 & 4 & 8 & 1 \\
\hline
9 & 7 & 3 & 3 & 9 & 0 \\
\hline
9 & 5 & 3 & 3 & 9 & 2 \\
\hline
9 & 7 & 4 & 4 & 9 & 0 \\
\hline
9 & 8 & 4 & 4 & 9 & 1 \\
\hline
10 & 6 & 3 & 3 & 10 & 2 \\
\hline
10 & 7 & 3 & 3 & 10 & 3 \\
\hline
10 & 8 & 4 & 4 & 10 & 0 \\
\hline
10 & 9 & 4 & 4 & 10 & 1 \\
\hline
10 & 6 & 4 & 4 & 10 & 3 \\
\hline
10 & 7 & 4 & 4 & 10 & 4 \\
\hline
11 & 7 & 3 & 3 & 11 & 2 \\
\hline
11 & 9 & 4 & 4 & 11 & 0 \\
\hline
11 & 6 & 4 & 4 & 11 & 2 \\
\hline
12 & 7 & 4 & 4 & 12 & 2 \\
\hline
12 & 9 & 4 & 4 & 12 & 3 \\
\hline
13 & 8 & 4 & 4 & 13 & 2 \\
\hline
13 & 9 & 4 & 4 & 13 & 3 \\
\hline
14 & 13 & 2 & 2 & 14 & 1 \\
\hline
14 & 9 & 4 & 4 & 14 & 2 \\
\hline
15 & 13 & 2 & 2 & 15 & 0 \\
\hline
15 & 9 & 4 & 4 & 15 & 2 \\
\hline
16 & 13 & 3 & 3 & 16 & 3 \\
\hline
17 & 13 & 3 & 3 & 17 & 2 \\
\hline

\end{tabular}
 \label{t12}
\end{table}

\begin{table}
	\caption{Parameters for universally good FR codes using t-construction for $RHS > 0$ in Equation (\ref{eq1}), no repetition of codes with same parameters and codes with $\rho = 2$}
	\centering 
	\begin{tabular}{|c|c|c|c|c|c|}
   \hline
    n & k & d & $\rho$ & $\theta$ & t \\ 
    \hline
4 & 2 & 2 & 2 & 4 & 0 \\
\hline
4 & 3 & 2 & 2 & 4 & 1 \\
\hline
5 & 3 & 2 & 2 & 5 & 0 \\
\hline
5 & 4 & 2 & 2 & 5 & 1 \\
\hline
6 & 4 & 2 & 2 & 6 & 0 \\
\hline
6 & 5 & 2 & 2 & 6 & 1 \\
\hline
7 & 5 & 2 & 2 & 7 & 0 \\
\hline
14 & 13 & 2 & 2 & 14 & 1 \\
\hline
15 & 13 & 2 & 2 & 15 & 0 \\
\hline
\end{tabular}
 \label{t11}
\end{table}


\begin{table}
	\caption{Parameters for universally good FR codes using t-construction for $RHS > 0$ in Equation (\ref{eq1}), no repetition of codes with same parameters and code with $\rho = 3$}
	\centering 
	\begin{tabular}{|c|c|c|c|c|c|}
   \hline
    n & k & d & $\rho$ & $\theta$ & t \\ 
    \hline
6 & 5 & 3 & 3 & 6 & 1 \\
\hline
7 & 5 & 3 & 3 & 7 & 0 \\
\hline
7 & 6 & 3 & 3 & 7 & 1 \\
\hline
7 & 3 & 3 & 3 & 7 & 2 \\
\hline
7 & 4 & 3 & 3 & 7 & 3 \\
\hline
8 & 6 & 3 & 3 & 8 & 0 \\
\hline
8 & 7 & 3 & 3 & 8 & 1 \\
\hline
8 & 4 & 3 & 3 & 8 & 2 \\
\hline
8 & 5 & 3 & 3 & 8 & 4 \\
\hline
9 & 7 & 3 & 3 & 9 & 0 \\
\hline
9 & 5 & 3 & 3 & 9 & 2 \\
\hline
10 & 6 & 3 & 3 & 10 & 2 \\
\hline
10 & 7 & 3 & 3 & 10 & 3 \\
\hline
11 & 7 & 3 & 3 & 11 & 2 \\
\hline
16 & 13 & 3 & 3 & 16 & 3 \\
\hline
17 & 13 & 3 & 3 & 17 & 2 \\
\hline
\end{tabular}
 \label{t10}
\end{table}

\section{Conclusion}
In this work, we show that the FR codes constructed using the Partial Regular graphs are universally good. We also give a new ring construction of FR codes. A conjecture for the reconstruction degree of FR codes (constructed using ring construction) is given. We identify the universally good FR codes for homogeneous DSSs in both ring construction and $t-$construction.
 \small
 \bibliographystyle{IEEEtran}
 \bibliography{cloud}
\end{document}